\newtheorem{theorem}{Theorem}
\newtheorem{proposition}{Proposition}
\newtheorem{lemma}{Lemma}
\newtheorem{definition}{Definition}
\newtheorem{example}{Example}
\newtheorem{remark}{Remark}
\newcommand{\TwoOne}[2]
{\begin{bmatrix}
{#1} \\
{#2}
\end{bmatrix}
}
\newcommand{\OneTwo}[2]
{\begin{bmatrix} {#1} & {#2}
\end{bmatrix}
}
\newcommand{\TwoTwo}[4]
{\begin{bmatrix}
{#1} & {#2} \\
{#3} & {#4}
\end{bmatrix}
}
\newcommand{\Ltwo}{\boldsymbol{\rm L}_{2}}
\newcommand{\Ltwoe}{\boldsymbol{\rm L}_{2e}} 
\title{\LARGE \bf
Passivity Compensation: A Distributed Approach for Consensus Analysis in Heterogeneous Networks
}
\author{Yongkang Su, Sei Zhen Khong and Lanlan Su% <-this % stops a space
%\thanks{This work was not supported by any organization}% <-this % stops a space
\thanks{Y. Su is with the School of Electrical and Electronic Engineering, University of Sheffield, Sheffield, UK
        {\tt\small ysu34@sheffield.ac.uk}}%
\thanks{S. Z. Khong is with the Department of Electrical Engineering, National Sun Yat-sen University, Kaohsiung 80424, Taiwan
        {\tt\small szkhong@mail.nsysu.edu.tw}}%
\thanks{L. Su is with the Department of Electrical and Electronic Engineering, University of Manchester, Manchester, UK
        {\tt\small lanlan.su@manchester.ac.uk}}%
}
\begin{document}

\maketitle
\thispagestyle{empty}
\pagestyle{empty}

%%%%%%%%%%%%%%%%%%%%%%%%%%%%%%%%%%%%%%%%%%%%%%%%%%%%%%%%%%%%%%%%%%%%%%%%%%%%%%%%
\begin{abstract}
This paper investigates a passivity-based approach to output consensus analysis in heterogeneous networks composed of non-identical agents coupled via nonlinear interactions, in the presence of measurement and/or communication noise. Focusing on agents that are input-feedforward passive (IFP), we first examine whether a shortage of passivity in some agents can be compensated by a passivity surplus in others, in the sense of preserving the passivity of the transformed open-loop system defined by the agent dynamics and network topology. We show that such compensation is only feasible when at most one agent lacks passivity, and we characterise how this deficit can be offset using the excess passivity within the group of agents. For general networks, we then investigate passivity compensation within the feedback interconnection by leveraging the passivity surplus in the coupling links to locally compensate for the lack of passivity in the adjacent agents. In particular, a distributed condition, expressed in terms of passivity indices and coupling gains, is derived to ensure output consensus of the interconnected network.
\end{abstract}

\begin{keywords}
Passivity, heterogeneous networks, nonlinear coupling, consensus.
\end{keywords}

%%%%%%%%%%%%%%%%%%%%%%%%%%%%%%%%%%%%%%%%%%%%%%%%%%%%%%%%%%%%%%%%%%%%%%%%%%%%%%%%
\section{Introduction}
In the subject of feedback stability analysis, the energy-based notion of passivity and the associated passivity theorem play a crucial role \cite{willems1972dissipative1}. The theorem states that if two open-loop systems are passive, and one possesses an excess of passivity, then the stability of their feedback interconnection can be established. This classical result has been extended by introducing quantitative measures of passivity, such as the input-feedforward passivity (IFP) index and the output-feedback passivity (OFP) index \cite{DesVid75}. These indices provide a more flexible framework for feedback stability analysis by enabling passivity “compensation”: when one system exhibits a shortage of passivity, it can be compensated by the passivity surplus of the other system \cite{van2000l2}.

In the literature on network consensus, a widely adopted approach is to transform the problem into a feedback stability analysis via a projection operation \cite{arcak2007passivity,scardovi2010synchronization,hamadeh2011global}. This transformation typically involves decomposing the network dynamics into a feedback interconnection of two open-loop systems: one representing the collection of agent dynamics and the other reflecting the graph topology and edge dynamics. The seminal work \cite{chopra2006passivity} establishes that if each agent’s dynamics are passive and they are diffusively coupled through a connected graph, the network will achieve consensus. Following the above-mentioned work, consensus problem of network with maximal equilibrium independent passive agents is addressed in \cite{burger2014duality}. This result was further extended in \cite{jain2018regularization,sharf2020geometric}, where the passivity-based analysis is generalised to scenarios in which all agents may lack passivity. In such cases, a suitable controller augments the agent to achieve maximal equilibrium independent passivity.  These studies address the consensus problem by showing that the open-loop system reflecting the graph topology is (strictly) passive, while the other open-loop system is rendered passive by either assuming all agents possess the relevant passivity property or through controller design. In contrast, our work focusses on analysing consensus in networks of agents that may lack passivity, without relying on local passivating controllers.

% Following the above-mentioned work, a number of significant research results have been developed; see, for example, \cite{arcak2007passivity,burger2014duality}.

% This result was further extended in \cite{li2019consensus}, where the passivity-based analysis is generalised to scenarios in which all agents may lack passivity. In such cases, the surplus of passivity in the coupling graph is used to compensate for this shortage, thereby enabling consensus.

In this work, we consider heterogeneous networks, i.e., networks composed of agents with different dynamics or characteristics, in which agents are coupled through sector-bounded interactions. Motivated by simple examples of two-agent and three-agent networks, we explore whether a shortage of passivity in some agents within a group can be meaningfully compensated by the passivity surplus of other agents, in a manner that facilitates consensus analysis.  A key observation from passivity theory is that the passivity property of a dynamical system is preserved under symmetric transformations of its input and output variables. Particularly, through pre- and post-multiplication by the graph incidence matrix and its transpose, this principle can be applied to network dynamics, thereby preserving passivity \cite{bai2011cooperative}. However, it remains an open question whether the passivity of the open-loop system — defined by the collective agent dynamics in conjunction with an input-output transformation determined by the network topology — can be ensured through passivity compensation between agents. If this were achievable, it would relax classical passivity conditions, demonstrating that even when some agents lack passivity, the transformed open-loop system could remain passive without requiring any controller design. In that case, the passivity theorem could be invoked to establish consensus immediately, provided that the edge dynamics are strictly passive.
We show that such compensation within the collection of agent dynamics alone is possible only in a highly restricted case: at most one agent may lacks passivity, and its shortage can be compensated by the excess passivity of other agents. 

Beyond this, for general networks, we turn our attention to the passivity surplus present in the coupling links. Key contributions along this research line include \cite{scardovi2010synchronization,li2019consensus,sharf2019network}. However, the consensus conditions proposed in these works are centralised, as they rely on global knowledge of agents and network topology, which limits their scalability and practicality in large-scale or distributed settings. In this work, we demonstrate that the passivity surplus in coupling links can be used to compensate the local lack of passivity in the agents connected by those links. Specifically, we derive a distributed condition, expressed in terms of passivity indices and coupling gains, under which consensus is ensured.

The remainder of the paper is organised as follows. Section~\ref{sec: Preliminaries} introduces the notation and preliminary concepts. Motivating examples  are presented in Section~\ref{sec: motivating examples}. Section~\ref{sec: problem formulation} formally states the problem. The main results are developed in Section~\ref{sec: main results}. In particular, Subsection~\ref{sec: Limitations of Generalisation to Arbitrary Networks} addresses passivity compensation among agents, while Subsection~\ref{sec: Passivity Compensation in Feedback Connection} focuses on passivity compensation between agents and their coupling links. Concluding remarks are provided in Section~\ref{sec: Conclusion}.
 %%%%%%%%%%%%%%%%%%%%%%%%%%%%%%%%%%%%%%%%%%%%%%%%
 
\section{Preliminaries}\label{sec: Preliminaries}

\subsection{Notation}
Let $\mathbb{R}$ be the set of real numbers. For a matrix $A$, denote by $A^{\top}$ and $\text{rank}(A)$ its transpose and its rank, respectively. Let $\textbf{1}_m := [1, \ldots ,1]^{\top} \in {\mathbb{R}^m}$. Given scalars ${{a_1}, \ldots ,{a_m}}$, let the column vector ${\rm col}\left( {{a_1}, \ldots ,{a_m}} \right) := {\left[ {a_1, \ldots ,a_m} \right]^{\top}}$ and $\text{diag}\{a_1,\dots, a_m\}$ the diagonal matrix with its $i$th diagonal entry being $a_i$. Given a symmetric matrix $A=A^{\top}$, we use $A\succ 0$ (resp., $A\succcurlyeq 0$) to denote that $A$ is positive definite (resp., positive semi-definite). 
Define the signal space $\Ltwo =\{ {x:\left[ {0,\infty } \right) \to {\mathbb{R}^m}|{{\left\| x \right\|}^2}: = \int_0^\infty  {{{\left| {x(t)} \right|}^2}dt < \infty } } \}$ where $|\cdot|$ denotes the Euclidean norm.
For any $x:\left[ {0,\infty } \right) \to {\mathbb{R}^m}$, denote by $P_T$ the truncation operator so that $\left( {{P_T}x} \right)(t) = x(t)$ for $t \le T$ and $\left( {{P_T}x} \right)(t) = 0$ for $t>T$. Define $\Ltwoe$ as $\Ltwoe =\{ {x:\left[ {0,\infty } \right) \to {\mathbb{R}^m}|{P_T}x \in \Ltwo ,\forall T \ge 0}\}$. Given $x \in \Ltwoe$ and $ T \ge 0$, ${\left\| x \right\|_T} := {( {\int_0^T {{{\left| {x(t)} \right|}^2}dt} } )^{\frac{1}{2}}}$. Given $x,y \in \Ltwoe $ and $T \ge 0$, ${\left\langle {x,y} \right\rangle _T} := \int_0^T {{x^{\top}}(t)y(t)dt}$. An operator $H:\Ltwoe \to \Ltwoe$ is said to be causal if ${P_T}H{P_T} = {P_T}H,\,\forall T \in \mathbb{R}$. 
% All operators $H:\Ltwoe \to \Ltwoe$ considered in this work are assumed to map $0$ to $0$. \textcolor{red}{Are you sure this is respected in your examples?}

\subsection{Graph Theory}\label{subsetion_Graph_Theory}
A graph is defined by $\mathcal{G} = (\mathcal{N},\mathcal{E} )$, where $\mathcal{N} = \{ 1, \ldots ,n\} $ is the set of nodes and $\mathcal{E} \subset \mathcal{N} \times \mathcal{N}$ is the set of edges or links. The edge $(i,j) \in \mathcal{E}$ denotes that node $i$ can obtain information from node $j$. Let $\mathcal{N}_i=\left\{ {j \in \mathcal{N}|(i,j) \in \mathcal{E}} \right\}$ denote the set of neighbours of node $i$. The graph $\mathcal{G}$ is said to be undirected if $(i,j) \in \mathcal{E}$ then $(j,i) \in \mathcal{E}$. $\mathcal{G}$ is said to be connected if there exists a sequence of edges between every pair of nodes. For an undirected graph $\mathcal{G}$, we may assign an orientation to $\mathcal{G}$ by considering one of the two nodes of a link to be the positive end and the other one to be the negative end. Denoting by $\mathscr{L}_i^ +$ (resp., $\mathscr{L}_i^ -$) the set of links for which node $i$ is the positive (resp. negative) end. Let $p$ be the cardinality $\mathcal{E}$, i.e., the total number of links. Define the incidence matrix $D=[d_{ik}]\in\mathbb{R}^{n\times p}$ of an undirected graph $\mathcal{G}$ as
\begin{equation*}
{d_{ik}} = \left\{ 
\begin{matrix}
     + 1, & k \in \mathscr{L}_i^ + \\
 - 1, & k \in \mathscr{L}_i^ - \\
0,& \mathrm{otherwise}.
\end{matrix}
\right.
\end{equation*}
For an undirected graph $\mathcal{G}$, it holds that $D^{\top} \textbf{1}_n =0 $. A spanning tree in $\mathcal{G}$ is an edge-subgraph of $\mathcal{G}$ which has $n-1$ edges and contains no circuits \cite[p29]{biggs1993algebraic}. A star graph is the graph that consists of one central node connected directly to multiple other nodes, which are not connected to each other.

\subsection{Passivity}
In this work, we adopt the definitions of passivity and input feedforward passivity in \cite{DesVid75} for system described by input-output maps.
\begin{definition}\label{def: passive}
A causal operator $H:\Ltwoe \to \Ltwoe $ is said to be passive if there exists some constant $\beta \in \mathbb{R}$ such that
\begin{equation}\label{eq: passive}
 {\left\langle {u,Hu } \right\rangle _T}\ge \beta,\,\forall u \in \Ltwoe ,\,\forall T \ge 0,
\end{equation}
and input-feedforward passive (IFP) if there exist $\nu \in \mathbb{R}$ and $\beta \in \mathbb{R}$ such that
\begin{align}\label{eq: IFP}
 {\left\langle {u,Hu } \right\rangle _T}\ge \nu\left\| {u} \right\|_T^2 + \beta,\forall u \in \Ltwoe,\,\forall T \ge 0.
\end{align}
\end{definition}
 
We say that $H$ is $\nu$-IFP if \eqref{eq: IFP} holds. The positive (or negative) sign of $\nu$ indicates the surplus (or shortage) of passivity of $H$ in \eqref{eq: IFP}. Intuitively, we are interested in the largest $\nu$. 
% A $\nu$-IFP system is said to be input strictly passive when $\nu > 0$. 

\section{Motivating Examples}\label{sec: motivating examples}
\begin{figure}[!ht]
\centering
\includegraphics[width=7cm]{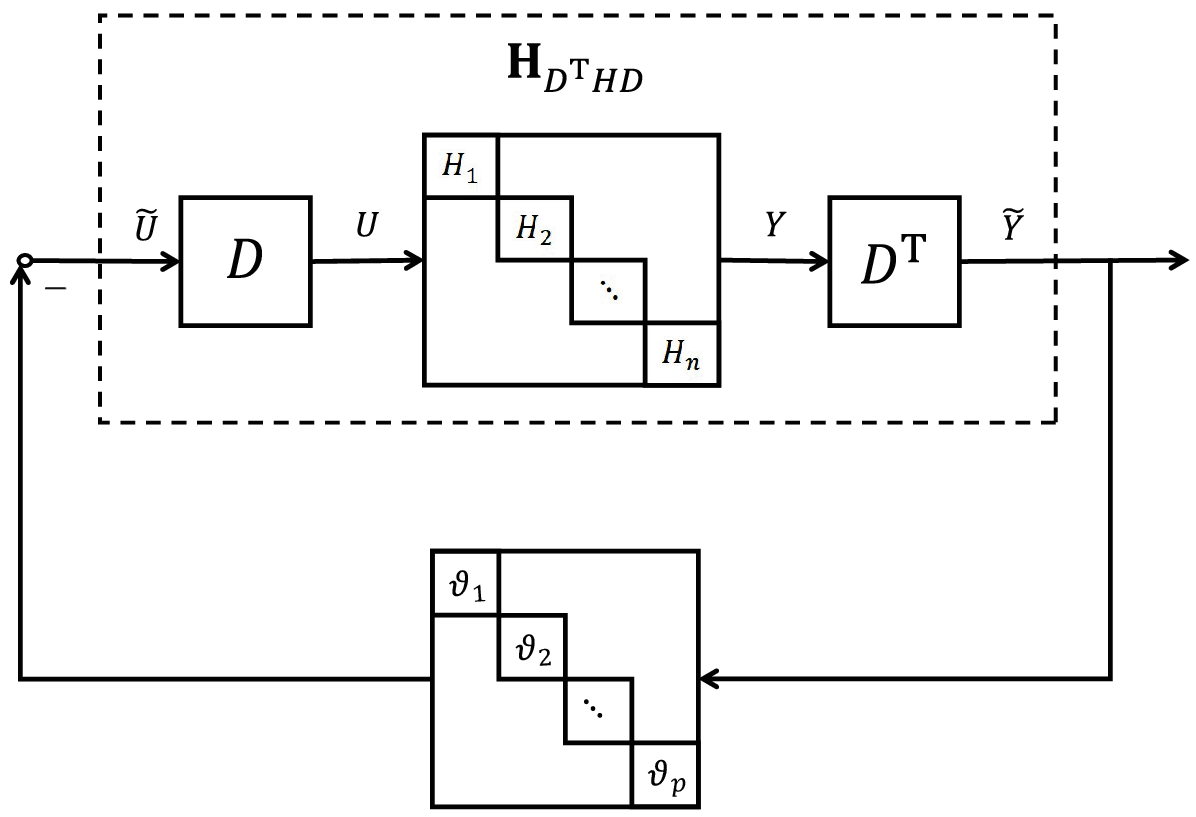}
\caption{Block diagram of the interconnected network: $H_i, \,i\in\{1,2,\ldots,n\}$ is the individual agent dynamics and $\vartheta_i, \,i\in\{1,2,\ldots,p\}$ denotes the coupling at each edge.}
\label{open_loop}
\end{figure}

A dynamical network with diffusive coupling over an undirected graph can be represented by the block diagram shown in Fig.~\ref{open_loop}.
Based on the feedback configuration in Fig.~\ref{open_loop}, the passivity theorem can be applied to establish output consensus in the network, provided that the open-loop system $\mathbf{H}_{D^{\top} H D}$ (enclosed by the dashed box) is passive, and that the coupling operators $\vartheta_i(\cdot)$, $i \in \{1, 2, \ldots, p\}$, are strictly passive. For the remainder of this work, we refer to the system inside the dashed box as the open-loop system $\mathbf{H}_{D^{\top} H D}$.
We observe that  ${U^{\top}}Y = {(D\tilde{U})^{\top}}Y = {\tilde{U}^{\top}}\tilde{Y}$, which means passivity from $\tilde{U}$ to $\tilde{Y}$ implies a weaker form of passivity from $U$ to $Y$\footnote{Note that the signal space for $U$ is constrained by the subspace $\mathrm{Image}(D)$.}. This provides the possibility of obtaining passivity of the open-loop system $\mathbf{H}_{D^{\top}HD}$ without requiring the dynamics of every individual agent $H_i$ to be passive. Based on this observation, and motivated by the analysis of two simple network examples to be presented next, we examine whether the open-loop system $\mathbf{H}_{D^{\top}HD}$ is passive in the presence of non-passive agents. \par

\subsection{A two-agent network}
\begin{figure}[!ht]
\centering
\includegraphics[width=5cm]{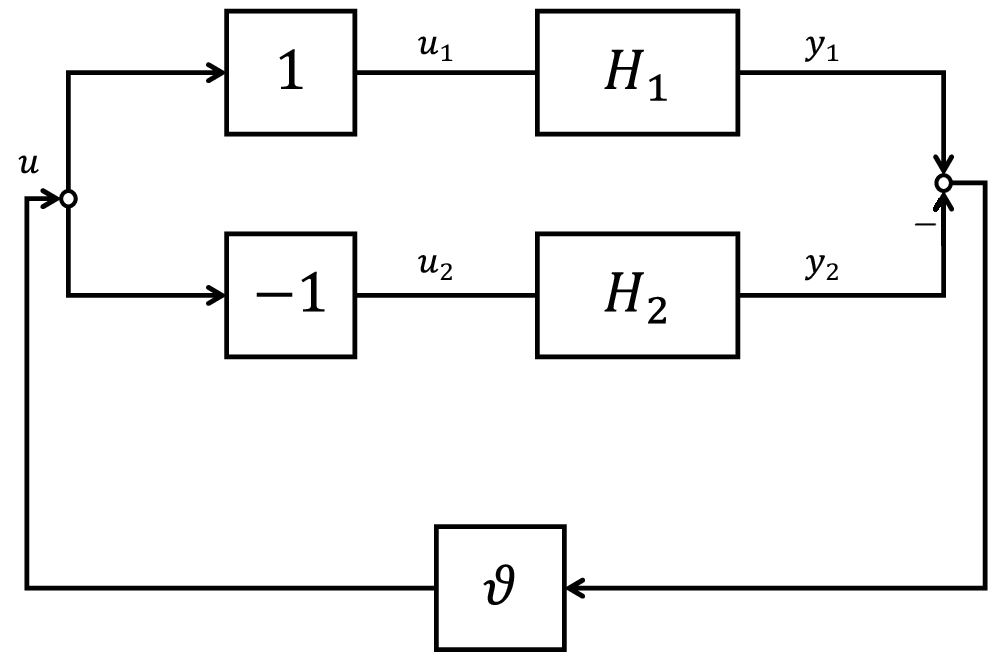}
\caption{Block diagram of the two-agent network.}
\label{two-agent network}
\end{figure}
Consider a network consisting of only two LTI agents: $H_1: u_1 \mapsto y_1$ and $H_2: u_2 \mapsto y_2$. Suppose that $H_1$ is $-\nu$-IFP and $H_2$ is $\nu$-IFP, where $\nu \geq 0$. Since there are only two agents, the incidence matrix is  $D = [1 \ -1]^\top$. The block diagram of this two-agent network is shown in Fig. \ref{two-agent network}. Noting that $y_1 = H_1 u_1 = H_1 u$ and $y_2 = H_2 u_2 = -H_2 u$, it follows that the open-loop system $\mathbf{H}_{D^{\top}HD}$ mapping $\tilde{U} = u$ to $\tilde{Y} = y_1 - y_2$ is given by $H_1 + H_2$. Since $H_1$ is $-\nu$-IFP and $H_2$ is $\nu$-IFP, it follows immediately that $H_1 + H_2$ is passive, which demonstrates that the shortage of passivity in $H_1$ can be compensated by the surplus of passivity in $H_2$, rendering the open-loop system $\mathbf{H}_{D^{\top}HD}$ passive.\par

\subsection{A three-agent network}
\begin{figure}[!ht]
\centering
\includegraphics[width=3cm]{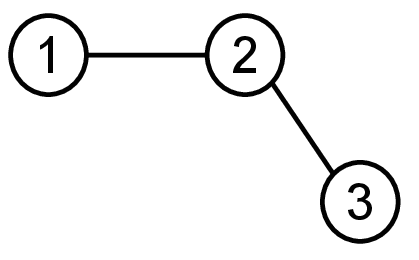}
\caption{Three-agent network.}
\label{three-agent network}
\end{figure}
Consider the network in Fig. \ref{three-agent network} with three LTI agents: $H_1: u_1\mapsto y_1$, $H_2: u_2\mapsto y_2$ and $H_3: u_3\mapsto y_3$. Suppose $H_1, H_3$ are IFP with $\nu \ge 0$ and $H_2$ is IFP with $\hat \nu \le 0$.  The incidence matrix of the graph (in Fig. \ref{three-agent network}) can be chosen to be $$D=\begin{bmatrix}
    1 & 0\\
    -1 & 1\\
    0  & -1
\end{bmatrix}.$$
Then it follows that the open-loop system $\mathbf{H}_{D^{\top}HD}$ maps $\tilde U=\TwoOne{\tilde{u}_1}{\tilde{u}_2}$ to $\tilde Y=\TwoOne{y_1-y_2}{y_2-y_3}$. Next, it can be derived that $u_1=\tilde{u}_1$, $u_2=-\tilde{u}_1+\tilde{u}_2$ and $u_3=-\tilde{u}_2$, and therefore $y_1-y_2=\left( H_1+H_2 \right)\tilde{u}_1-H_2\tilde{u}_2$ and $y_2-y_3= -H_2\tilde{u}_1+\left( H_2+H_3 \right)\tilde{u}_2$. Based on this, we can write the inner product of the input and output of the open-loop system $\mathbf{H}_{D^{\top}HD}$ as
\begin{align*}
 & {\left\langle { \TwoOne{\tilde{u}_1}{\tilde{u}_2}, \TwoOne{y_1-y_2}{y_2-y_3} } \right\rangle _T}\\
=& {\left\langle { \tilde{u}_1, y_1-y_2}\right\rangle _T} +{\left\langle { \tilde{u}_2, y_2-y_3}\right\rangle _T} \\
  % &= {\left\langle { \tilde{u}_1, \left( H_1+H_2 \right) \tilde{u}_1}\right\rangle _T} +{\left\langle { \tilde{u}_2,\left( H_2+H_3 \right)\tilde{u}_2 }\right\rangle _T}
  % - {\left\langle { \tilde{u}_1, H_2 \tilde{u}_2}\right\rangle _T} -{\left\langle { \tilde{u}_2, H_2\tilde{u}_1} \right\rangle _T}\\
=& {\left\langle { \tilde{u}_1, H_1 \tilde{u}_1}\right\rangle _T} +{\left\langle { \tilde{u}_2,H_3\tilde{u}_2} \right\rangle _T} + {\left\langle { \tilde{u}_1-\tilde{u}_2, H_2 (\tilde{u}_1-\tilde{u}_2)}\right\rangle _T}. 
\end{align*}
Since $H_1, H_3$ are IFP with $\nu \ge 0$ and $H_2$ is IFP with $\hat \nu \le 0$, one has
\begin{align*}
 & {\left\langle { \TwoOne{\tilde{u}_1}{\tilde{u}_2}, \TwoOne{y_1-y_2}{y_2-y_3} } \right\rangle _T}  \\
 \ge &\,  \nu \left( {\left\| {{{\tilde u}_1}} \right\|_T^2 + \left\| {{{\tilde u}_2}} \right\|_T^2} \right) + \hat \nu \left\| {{{\tilde u}_1} - {{\tilde u}_2}} \right\|_T^2+\bar \beta\\
 = & \,\left( {\nu  + \hat \nu } \right)\left( {\left\| {{{\tilde u}_1}} \right\|_T^2 + \left\| {{{\tilde u}_2}} \right\|_T^2} \right) - 2\hat \nu {\left\langle {{{\tilde u}_1},{{\tilde u}_2}} \right\rangle _T}+\bar \beta,
\end{align*}
where $\bar \beta = \sum_{i = 1}^3{{\beta _i}}$. Then it follows from $2\hat \nu {\left\langle {{{\tilde u}_1},{{\tilde u}_2}} \right\rangle _T} \le 2\left| {\hat \nu } \right|{\left\| {{{\tilde u}_1}} \right\|_T}{\left\| {{{\tilde u}_2}} \right\|_T}\le \left| {\hat \nu } \right|\left\| ({{{\tilde u}_1}} \right\|_T^2 + \left\| {{{\tilde u}_2}} \right\|_T^2)$ that ${\langle {\tilde U,\tilde Y} \rangle _T} \ge \bar \beta$ if $\nu  + \hat \nu  \ge \left| {\hat \nu } \right|$. That is, the open-loop system $\mathbf{H}_{D^{\top}HD}$ mapping $\tilde U$ to $\tilde Y$ is passive if $|\hat \nu| \le 0.5 \nu$.\par
These two simple examples demonstrate that the deficiency in passivity in one agent of a network can be compensated for by excess passivity in other agents.

\section{Problem Formulation}\label{sec: problem formulation}
\begin{figure}
\centering
\includegraphics[width=6cm]{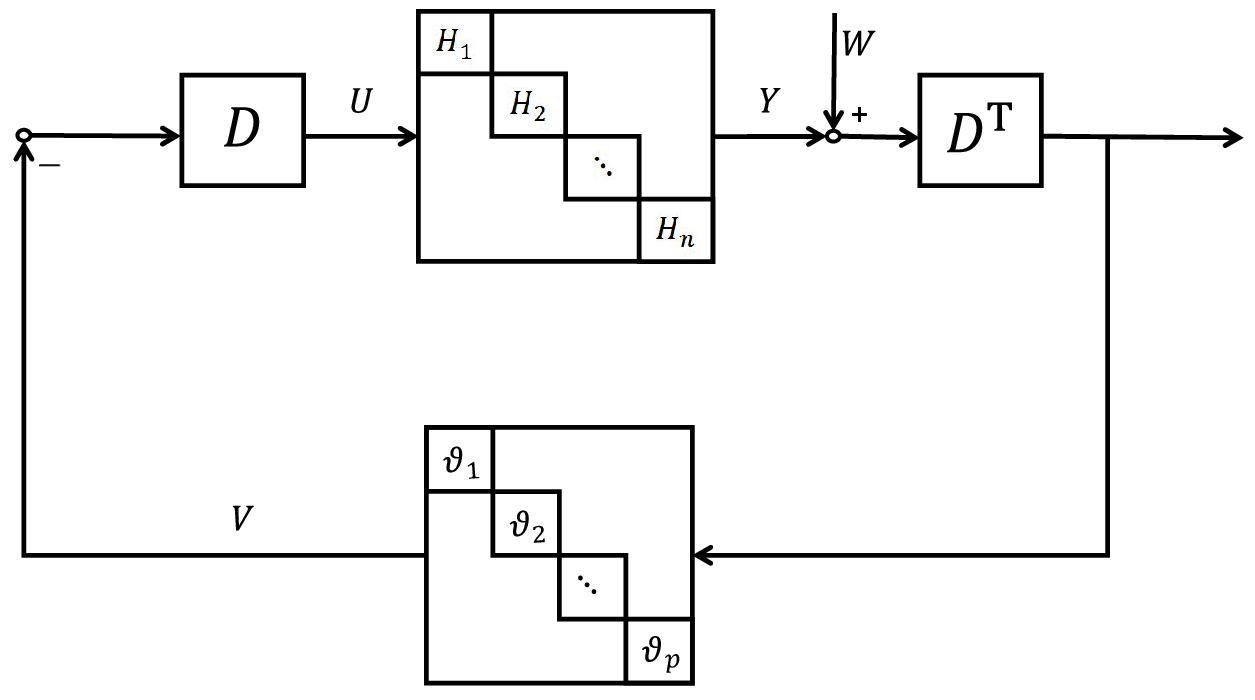}
\caption{Block diagram of the network in \eqref{eq: system model} \& \eqref{eq: input}.}
\label{fig.note2}
\end{figure}
Consider a group of $n$ systems $H_i:\Ltwoe \to\Ltwoe$ given by
\begin{equation}\label{eq: system model}
    {y_i} = {H_i}{u_i}, \,i\in\{1,2,\ldots,n\},
\end{equation}
where $u_i, y_i \in \Ltwoe$ denote respectively the input and output of the $i$-th system. 
% Note that to achieve non-trivial output consensus,  the transfer function $H_i$ are allowed to have non-repeated  poles on the imaginary axis. 
Suppose the group of systems is interconnected via an undirected and connected graph $\mathcal{G} = (\mathcal{N}, \mathcal{E})$. Specifically, the input $u_i$ to the $i$-th system is given by
\begin{equation}\label{eq: input} 
u_i = -\sum\limits_{j \in \mathcal{N}_i} {{\vartheta_{ij}}\left( {y_i + w_i - y_j - w_j} \right)}, \quad i \in \mathcal{N}. 
\end{equation}
Here, the external input $w_i \in \Ltwoe, i \in \{1,2,\ldots,n\}$ may represent measurement noise at the $i$-th system, or $(w_i - w_j) \in \Ltwoe$ may represent communication noise present on the links connecting the $i$-th and $j$-th systems. The operator ${\vartheta_{ij}}: \Ltwoe \to \Ltwoe$ mapping $0$ to $0$ is static and satisfies:
\begin{enumerate}
\item $\left( {{\vartheta _{ji}}\left( -x \right)} \right)\left( t \right)= - \left( {{\vartheta _{ij}}\left( x \right)} \right)\left( t \right)$  reflecting the undirected nature of the graph,
\item $0 < \underline{\alpha}_{ij}  \le \frac{\left( {{\vartheta _{ij}}\left( x \right)} \right)\left( t \right)}{{x\left( t \right)}} \le \overline{\alpha}_{ij} < \infty$ for all nonzero $x(t)$.
\end{enumerate}

Let $Y := {\rm col}\left( {{y_1}, \ldots ,{y_n}} \right)$ and the same notation is used to define the vectors $U$ and $W$. 
Recalling the definition of incidence matrix $D$, it can be obtained that
\begin{equation}\label{eq: iuput vector}
U = - D\Psi\left(D^{\top} \left( {Y + W} \right) \right), 
\end{equation}
where $\Psi : \Ltwoe \to \Ltwoe$ is component-wise defined as
\begin{equation}\label{eq:Psi}
\Psi \left( {\left[ {\begin{array}{*{20}{c}}
{{a_1}}\\
 \vdots \\
{{a_p}}
\end{array}} \right]} \right) = \left[ {\begin{array}{*{20}{c}}
{{b_1}}\\
 \vdots \\
{{b_p}}
\end{array}} \right]
% \Psi \left( {{{\left[ {\begin{array}{*{20}{c}}
% {{a_1}}& \cdots &{{a_p}}
% \end{array}} \right]}^{\top}}} \right) = {\left[ {\begin{array}{*{20}{c}}
% {{b_1}}& \cdots &{{b_p}}
% \end{array}} \right]^{\top}}
\end{equation}
such that ${b_k} = {\vartheta _k}\left( {{a_k}} \right),\,\forall k \in \{1,\dots, p\}$ with ${\vartheta _k}\left(  \cdot  \right) = {\vartheta _{ij}}\left(  \cdot  \right)$ if $d_{ik}=1$ and $d_{jk}=-1$.  Each component $\vartheta_k(\cdot)$ in $\Psi$ represents the coupling associated with the $k$-th edge. Define
\begin{align}\label{eq:V}
  V:=\Psi\left(D^{\top} \left( {Y + W} \right) \right).  
\end{align}
The network under consideration is represented by the block diagram in Fig.~\ref{fig.note2}. The only difference between the frameworks in Fig.~\ref{fig.note2} and Fig.~\ref{open_loop} is the inclusion of the external disturbance input $W$. They are equivalent when $W=0$.

\begin{definition}\label{def: IO consensus}
The network \eqref{eq: system model}, \eqref{eq: input} is said to achieve input-output consensus if there exist a finite gain $\rho>0$ and a constant $\sigma\ge 0$ such that
\begin{align}\label{eq:aim}
    {\left\| {D^{\top} Y} \right\|_T}\le \rho{\left\| {D^{\top} W} \right\|_T}+\sigma,\,\forall {W } \in \Ltwoe,\,\forall T \ge 0.
\end{align}
\end{definition}

In the absence of noise, i.e., when $W = 0$, the constant $\sigma$ in \eqref{eq:aim} accounts for a possible transient process resulting from differences in the agents' initial response, before they asymptotically converge to consensus.
As remarked in  \cite{scardovi2010synchronization}, ${\left\| {D^{\top} Y} \right\|_T}$ quantifies the synchrony of the outputs in the time interval $[0,T]$, and \eqref{eq:aim} implies that the interconnected network enjoy the property that external input with a high level of consensus produces output with the same property.

For a group of IFP agents \eqref{eq: system model}, our goal is to investigate whether a shortage of passivity (IFP with a negative index) in some agents within the group can be compensated by the passivity surplus (IFP with a positive index) of other agents, in the sense of ensuring passivity of the open-loop system $\mathbf{H}_{D^{\top}HD}$. Furthermore, we also explore how passivity surplus present in the coupling links can be utilised to locally compensate for this shortage, thereby contributing to the network consensus.

%%%%%%%%%%%%%%%%%%%%%%%%%%%%%%%%%%%%%%%%%%%%%%

\section{Main Results}\label{sec: main results}
\subsection{Passivity Compensation in Arbitrary Networks}\label{sec: Limitations of Generalisation to Arbitrary Networks}
Building on the analysis of the two-agent and three-agent network examples presented in Section \ref{sec: motivating examples}, we now investigate whether the observed passivity compensation can be generalised to arbitrary network topologies. In this subsection, we  establish a form of negative result, demonstrating that compensation within the group of agent dynamics is only possible when at most one agent lacks passivity. We then characterise how this shortage of passivity can be compensated by the passivity surplus of other agents.\par

Consider the open-loop system $\mathbf{H}_{D^{\top}HD}:\tilde{U}\mapsto \tilde{Y}$  in Fig. \ref{open_loop}. Assuming that system $H_i: u_i\mapsto y_i, \,i\in\{1,2,\ldots,n\}$ is IFP with index $\nu_i$, one has
\begin{align}\label{note2_1}
{\langle {\tilde U,\tilde Y} \rangle _T} &= {\langle {D \tilde U,Y} \rangle _T} = {\left\langle {U,Y} \right\rangle _T}\nonumber\\
& \ge {\nu _1}\left\| {{u_1}} \right\|_T^2+\beta_1 +  \cdots {\nu _n}\left\| {{u_n}} \right\|_T^2 + \beta_n \nonumber\\
& = {\left\langle {U,\Xi U} \right\rangle _T}+\bar \beta = {\langle {\tilde U,{D^{\top}}\Xi D \tilde U} \rangle _T}+\bar \beta,
\end{align}
where $\Xi  = \mathrm{diag}\left\{ {{\nu _1}, \dots, {\nu _n}} \right\}$ and $\bar \beta = \sum_{i = 1}^n {{\beta _i}}$. The open-loop system $\mathbf{H}_{D^{\top}HD}$ is passive if and only if ${D^{\top}}\Xi D \succcurlyeq 0$. By the definition of incidence matrix,
it can be obtained that  ${{D^{\top}}\Xi} =[\xi_{ki}]\in\mathbb{R}^{p \times n}$, where ${\xi_{ki}} = \left\{ 
\begin{matrix}
     + \nu_i, & k \in \mathscr{L}_i^ + \\
 - \nu_i, & k \in \mathscr{L}_i^ - \\
0,& \mathrm{otherwise},
\end{matrix}
\right.$
% \begin{align*}
% {\xi_{ki}} = \left\{ 
% \begin{matrix}
%      + \nu_i, & k \in \mathscr{L}_i^ + \\
%  - \nu_i, & k \in \mathscr{L}_i^ - \\
% 0,& \mathrm{otherwise}.
% \end{matrix}
% \right.
% \end{align*}
with $\mathscr{L}_i^ +$ (resp., $\mathscr{L}_i^ -$) are the set of links for which node $i$ is the positive (resp. negative) end. 
It follows that  ${{D^{\top}}\Xi D} :=[\theta_{kl}]\in\mathbb{R}^{p \times p}$ where
\begin{align}\label{eq:defDEDT}
{\theta_{kl}} = \left\{
\begin{matrix}
   \nu_i+\nu_j, &  k=l \in \mathscr{L}_i^ + \cap \mathscr{L}_j^ - \\
     \nu_i, & k \in \mathscr{L}_i^ + \cap \mathscr{L}_j^ -, l \in \mathscr{L}_i^ + \\
 - \nu_i, & k \in \mathscr{L}_i^ + \cap \mathscr{L}_j^ -, l \in \mathscr{L}_i^ - \\
    -  \nu_j, & k \in \mathscr{L}_i^ + \cap \mathscr{L}_j^ -, l \in \mathscr{L}_j^ + \\
  \nu_j, & k \in \mathscr{L}_i^ + \cap \mathscr{L}_j^ -, l \in \mathscr{L}_j^ - \\
0,& \mathrm{otherwise}.
\end{matrix}
\right.
\end{align}

\begin{theorem} \label{thm: atmostone}
The open-loop system $\mathbf{H}_{D^{\top}HD}$ in Fig. \ref{open_loop}  is non-passive if two or more agents in $H_i: u_i\mapsto y_i, \,i\in\{1,2,\ldots,n\}$ have a shortage of passivity with negative IFP indices.
\end{theorem}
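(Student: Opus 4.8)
The plan is to reduce the operator-level claim to a finite-dimensional matrix condition and then exhibit an explicit direction of negativity. Since it has already been established (immediately after \eqref{note2_1}) that $\mathbf{H}_{D^{\top}HD}$ is passive if and only if $D^{\top}\Xi D \succcurlyeq 0$ with $\Xi = \mathrm{diag}\{\nu_1,\ldots,\nu_n\}$, proving non-passivity amounts to showing that $D^{\top}\Xi D$ fails to be positive semi-definite whenever at least two of the indices $\nu_i$ are negative. I would therefore construct a vector $z\in\mathbb{R}^p$ with $z^{\top}D^{\top}\Xi D z<0$.

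The key reparametrisation is to push the quadratic form through $D$. For any $z\in\mathbb{R}^p$ one has $z^{\top}D^{\top}\Xi D z=(Dz)^{\top}\Xi(Dz)=\sum_{i=1}^n \nu_i (Dz)_i^2$, so as $z$ ranges over $\mathbb{R}^p$ this quantity ranges over $\{\sum_{i} \nu_i x_i^2 : x\in\mathrm{Image}(D)\}$. Because $\mathcal{G}$ is connected and $D^{\top}\mathbf{1}_n=0$ with $\mathrm{rank}(D)=n-1$, the column space satisfies $\mathrm{Image}(D)=\{x\in\mathbb{R}^n:\mathbf{1}_n^{\top}x=0\}$. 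Hence $D^{\top}\Xi D\succcurlyeq 0$ is equivalent to requiring that $\sum_{i=1}^n \nu_i x_i^2\ge 0$ for every $x$ whose entries sum to zero.

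With this in hand the counterexample is immediate. Let $i_1\neq i_2$ be two indices with $\nu_{i_1}<0$ and $\nu_{i_2}<0$, and take $x=e_{i_1}-e_{i_2}$, where $e_i$ denotes the $i$-th standard basis vector of $\mathbb{R}^n$. Then $\mathbf{1}_n^{\top}x=0$, so $x\in\mathrm{Image}(D)$ and there exists $z\in\mathbb{R}^p$ with $Dz=x$; moreover $\sum_{i=1}^n \nu_i x_i^2=\nu_{i_1}+\nu_{i_2}<0$. Thus $z^{\top}D^{\top}\Xi D z<0$, so $D^{\top}\Xi D$ is indefinite and in particular not positive semi-definite, which by the characterisation above yields that $\mathbf{H}_{D^{\top}HD}$ is non-passive.

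I expect the main obstacle to lie not in the matrix computation but in the reverse (tightness) direction of the characterisation: one must ensure that a negative direction of the quadratic form genuinely forces the operator inequality in Definition~\ref{def: passive} to fail, rather than merely making the IFP lower bound in \eqref{note2_1} negative. This requires that each $\nu_i$ be the exact attainable IFP index, so that inputs realising equality in \eqref{eq: IFP} exist; for the LTI agents considered, such inputs can be built from the frequencies at which the passivity bound is active and routed through the path in $\mathcal{G}$ connecting $i_1$ and $i_2$ that realises $Dz=x$. A secondary point to verify carefully is the image characterisation of $D$, which is precisely where connectivity of the graph is essential.
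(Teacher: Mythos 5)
Your proposal is correct and reaches the same witness as the paper, but by a genuinely more direct route. The paper also reduces the claim to showing that $D^{\top}\Xi D$ is not positive semi-definite, but it gets there through a chain of congruence arguments: it first shows $D^{\top}\Xi D\succcurlyeq 0$ is equivalent to $D_{ST}^{\top}\Xi D_{ST}\succcurlyeq 0$ for a spanning tree $\mathcal{G}_{ST}$, then introduces a ``virtual'' graph containing both $\mathcal{G}_{ST}$ and a star graph as spanning trees to conclude that one may as well test positive semi-definiteness on the star with a non-passive agent at the centre, whose diagonal entries are $\nu_i+\nu_j$. Your argument replaces all of this with the observation that $z^{\top}D^{\top}\Xi Dz=(Dz)^{\top}\Xi(Dz)$ and that $\mathrm{Image}(D)=\{x:\mathbf{1}_n^{\top}x=0\}$ for a connected graph, so that the test vector $x=e_{i_1}-e_{i_2}$ immediately yields the same quantity $\nu_{i_1}+\nu_{i_2}<0$; this is shorter, avoids the virtual-graph construction, and makes transparent why connectivity is the only structural hypothesis needed. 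Note that both your proof and the paper's rest on the assertion, made after \eqref{note2_1} without proof, that passivity of $\mathbf{H}_{D^{\top}HD}$ is equivalent to (and not merely implied by) $D^{\top}\Xi D\succcurlyeq 0$; the ``only if'' direction requires the indices $\nu_i$ to be the tight (largest attainable) IFP indices and the existence of a common input signal through the path realising $Dz=e_{i_1}-e_{i_2}$ that simultaneously (near-)saturates the IFP inequalities of $H_{i_1}$ and $H_{i_2}$. You explicitly flag this as the remaining obstacle, which the paper does not; closing it fully would require an argument beyond what either proof supplies, but your proposal is at least as rigorous as the published one and is otherwise complete.
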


\begin{proof}
We prove this theorem by showing that the matrix ${{D^{\top}}\Xi D}$ cannot be positive semi-definite if two or more agents have negative IFP indices. 

For the  undirected and connected graph $\mathcal{G}$, let $\mathcal{G}_{ST}$ be any spanning tree of $\mathcal{G}$ and denote by $D_{ST}\in\mathbb{R}^{n\times (n-1)}$ the incidence matrix of $\mathcal{G}_{ST}$. 
Denote $\mathcal{E}_{ST} \subset \mathcal{E}$ as the edge set of $\mathcal{G}_{ST}$. Then the incidence matrix of $\mathcal{G}$ can be set as $D = \OneTwo{{D_{ST}}}{{D_R}}$, where $D_R \in \mathbb{R}^{n\times \left( {p - (n - 1)} \right)}$ is the incidence matrix corresponding to the edges not in the spanning tree. Since $\text{rank}(D)=\text{rank}(D_{ST}) = n-1$, $D$ and $\OneTwo{D_{ST}}{0}\in\mathbb{R}^{n\times p}$ have the same column space, i.e., they are column equivalent. By \cite[Definition 1.13.17]{petersen2012linear}, there exists an invertible matrix $Q \in \mathbb{R}^{p \times p}$ such that $D = \left[ {\begin{array}{*{20}{c}}
{{D_{ST}}}&0
\end{array}} \right]Q$, which implies that ${D^{\top}}\Xi D = Q^{\top}\TwoOne{D_{ST}^{\top}}{0}\Xi \OneTwo{D_{ST}}{0}{Q}$, i.e., ${D^{\top}}\Xi D$ and $\TwoOne{D_{ST}^{\top}}{0}\Xi\OneTwo{D_{ST}}{0}$ are congruent. Since $\TwoOne{D_{ST}^{\top}}{0}\Xi \OneTwo{D_{ST}}{0}=\TwoTwo{D_{ST}^{\top}\Xi {D_{ST}}}{0}{0}{0}$, one has ${D^{\top}}\Xi D \succcurlyeq 0$ is equivalent to $D_{ST}^{\top}\Xi {D_{ST}} \succcurlyeq 0$. 

Next, let $\mathcal{G}_{ST}^*$ denote any arbitrary star graph with $n$ nodes, and let $D_{ST}^*\in\mathbb{R}^{n\times (n-1)}$ be the corresponding incidence matrix. Suppose that there is a ``virtual" undirected and connected graph $\mathcal{G}_{v}$ that includes both $\mathcal{G}_{ST}^*$ and $\mathcal{G}_{ST}$ as spanning trees, and denote by $D_v$ its incidence matrix. By the previous argument, there exist invertible matrices $\bar Q$ and $\hat Q$ such that $D_v = \OneTwo{D_{ST}}{0}\bar Q = \OneTwo{D_{ST}^*}{0}\hat Q$. Hence, $\TwoOne{D_{ST}^{\top}}{0}\Xi \OneTwo{D_{ST}}{0}$ and $\TwoOne{D_{ST}^{*\top}}{0}\Xi \OneTwo{D_{ST}^*}{0}$ are congruent.
% It follows from  the fact that $D_{ST}$ can be transformed into $D_{ST}^*$ by a sequence of elementary column operations, $D_{ST}$ and $D_{ST}^*$ are column equivalent. 
Therefore, $D_{ST}^{\top}\Xi {D_{ST}} \succcurlyeq 0$ is equivalent to ${D_{ST}^{*{\top}}} \Xi {D_{ST}^*} \succcurlyeq 0$.
Suppose the $i$-th agent $H_i$ is IFP with negative index $\nu_i$, and let it be the centre of the  star topology. Recalling the derivation in \eqref{eq:defDEDT}, the diagonal elements of ${D_{ST}^{*{\top}}} \Xi {D_{ST}^*}$ are $\nu_i + \nu_j, j \in \{1,\ldots,n \}\backslash \{i\}$. Since $\nu_i < 0$, if any $\nu_j$ is negative, then at least one of the diagonal elements of ${D_{ST}^{*{\top}}} \Xi {D_{ST}^*}$ is negative, i.e., ${D_{ST}^{*{\top}}} \Xi {D_{ST}^*}$ is not positive semi-definite. Therefore, if two or more agents in $H_i: u_i\mapsto y_i, \,i\in\{1,2,\ldots,n\}$ have a shortage of IFP, ${{D^{\top}}\Xi D}$ cannot be positive semi-definite, which implies that the open-loop system $\mathbf{H}_{D^{\top}HD}$ in Fig. \ref{open_loop} is non-passive.
\end{proof}

\begin{remark}
Theorem~\ref{thm: atmostone} concerns the passivity of the open-loop system $\mathbf{H}_{D^\top H D} = D^\top \operatorname{diag}\{H_1, \ldots, H_n\} D$, which involves only the agent dynamics and the graph topology. It does not account for the nonlinear couplings at the links, represented by $\operatorname{diag}\{\vartheta_1, \ldots, \vartheta_p\}$, which constitute the other component of the feedback interconnection in Fig.~\ref{open_loop}.
\end{remark}

According to Theorem \ref{thm: atmostone}, the open-loop system $\mathbf{H}_{D^{\top}HD}$ can be passive only if at most one agent in the network is non-passive.  In the case where a single agent exhibits a negative IFP index, we propose in the following a condition in terms of passivity indices to ensure passivity of the open-loop system $\mathbf{H}_{D^{\top}HD}$. To this end, we first introduce the following lemma.

\begin{lemma}\label{lem: positive definite}
Given a matrix $A=A^\top = [a_{ij}]\in\mathbb{R}^{p\times p}$, it is positive semi-definite if there exists a diagonal  matrix $S=\mathrm{diag}\left\{ {{s _1}, \ldots, s_p} \right\}\succ 0$ such that ${a_{ii}}{s_i} \ge \sum_{j = 1,j \ne i}^p {\left| {{a_{ij}}} \right|{s_j}} $ for all $i = \{1, \ldots, p\}$. If these inequalities are strict, $A$ is positive definite.
\end{lemma}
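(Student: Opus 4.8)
The plan is to prove the positive semidefinite case by a direct lower bound on the quadratic form $x^\top A x$, using the weights $s_i$ to split each off-diagonal cross term asymmetrically, and then to obtain the positive definite case by tracking where strictness enters the same chain of inequalities. First I would fix an arbitrary $x = (x_1,\ldots,x_p)^\top \in \mathbb{R}^p$ and, using $A = A^\top$, write
\begin{equation*}
x^\top A x = \sum_{i=1}^p a_{ii}x_i^2 + \sum_{i<j} 2a_{ij}x_i x_j \ge \sum_{i=1}^p a_{ii}x_i^2 - \sum_{i<j} 2|a_{ij}|\,|x_i|\,|x_j|.
\end{equation*}
The pivotal idea is that the free weights $s_i>0$ let me bound each cross term by an \emph{asymmetric} weighted Young inequality, $2|x_i||x_j| \le \tfrac{s_j}{s_i}x_i^2 + \tfrac{s_i}{s_j}x_j^2$, valid for any $s_i,s_j>0$, rather than by the symmetric choice $x_i^2 + x_j^2$.

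The key bookkeeping step is then to collect, after summing these bounds over all pairs $i<j$, the total coefficient multiplying each $x_i^2$. The term $x_i^2$ receives a contribution $|a_{ij}|\tfrac{s_j}{s_i}$ from every pair $(i,j)$ with $j>i$ and a contribution $|a_{ki}|\tfrac{s_k}{s_i}$ from every pair $(k,i)$ with $k<i$; using $|a_{ki}| = |a_{ik}|$ these combine to exactly $\tfrac{1}{s_i}\sum_{j\ne i}|a_{ij}|s_j$. Consequently $\sum_{i<j} 2|a_{ij}||x_i||x_j| \le \sum_i \tfrac{x_i^2}{s_i}\sum_{j\ne i}|a_{ij}|s_j$, and the hypothesis $\sum_{j\ne i}|a_{ij}|s_j \le a_{ii}s_i$ collapses the right-hand side to $\sum_i a_{ii}x_i^2$. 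Substituting back yields $x^\top A x \ge 0$, proving $A \succcurlyeq 0$. (An alternative I would mention is the Gershgorin route: the similar matrix $S^{-1}AS$ has diagonal $a_{ii}$ and row sums $\tfrac{1}{s_i}\sum_{j\ne i}|a_{ij}|s_j \le a_{ii}$, so by Gershgorin every eigenvalue of $A$, which is real, lies in $[a_{ii}-R_i,\,a_{ii}+R_i]$ with $R_i \le a_{ii}$, hence is nonnegative.)

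For the positive definite claim I would rerun the same chain under strict inequalities: for any $x \ne 0$, pick an index $i$ with $x_i \ne 0$; then $\tfrac{x_i^2}{s_i}\sum_{j\ne i}|a_{ij}|s_j < a_{ii}x_i^2$ strictly for that $i$ while the remaining summands are still bounded by $a_{ii}x_i^2$, so the aggregate bound becomes strict and $x^\top A x > 0$. The main obstacle is not any single estimate but the \emph{bookkeeping} that shows the weights cancel exactly, namely recognising that the asymmetric split $\tfrac{s_j}{s_i},\tfrac{s_i}{s_j}$ is the unique choice for which the coefficient of $x_i^2$ reassembles into $\tfrac{1}{s_i}\sum_{j\ne i}|a_{ij}|s_j$ and thus lines up precisely with the weighted diagonal-dominance hypothesis; a symmetric bound would leave mismatched weights and fail to close the argument.
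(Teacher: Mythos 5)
Your proposal is correct, and your primary argument takes a genuinely different route from the paper's. The paper forms the congruent matrix $SAS$ (entries $a_{ij}s_is_j$), applies the Gershgorin disc theorem to conclude its eigenvalues are nonnegative under the weighted dominance hypothesis, and then transfers semidefiniteness back to $A$ via congruence. You instead bound the quadratic form $x^\top A x$ directly, splitting each cross term with the asymmetric weighted AM--GM inequality $2|x_i||x_j| \le \tfrac{s_j}{s_i}x_i^2 + \tfrac{s_i}{s_j}x_j^2$ and verifying that the coefficients reassemble into exactly $\tfrac{1}{s_i}\sum_{j\ne i}|a_{ij}|s_j$, so the hypothesis closes the estimate; your bookkeeping of the coefficient of $x_i^2$ and your handling of the strict case (strictness at any index $i$ with $x_i \ne 0$) are both sound. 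Your approach buys a fully elementary, self-contained proof that needs neither Gershgorin nor Sylvester's law of inertia, and it makes the role of the weights $s_i$ transparent; the paper's approach buys brevity by leaning on two standard results. Your parenthetical Gershgorin alternative via the similar matrix $S^{-1}AS$ is also valid and is essentially the paper's argument in similarity rather than congruence form.
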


\begin{proof}
First, observe that $SAS = [\bar a_{ij}] \in\mathbb{R}^{p\times p}$, where $\bar a_{ij} = a_{ij}s_is_j$. By the Gershgorin theorem \cite[p344]{horn2012matrix}, all eigenvalues of $SAS$ are located in the union of $p$ discs centred at $\bar a_{ii}$ of radius $\sum_{j = 1,j \ne i}^p {\left| {{{\bar a}_{ij}}} \right|}$ for $i = \{1, \ldots, p\}$. Therefore, if ${a_{ii}}s_i^2 \ge \sum_{j = 1,j \ne i}^p {\left| {{a_{ij}}} \right|{s_i}{s_j}}$, all eigenvalues of $SAS$ are non-negative, which implies that $SAS \succcurlyeq 0$. Since $S$ is an invertible matrix, $SAS \succcurlyeq 0$ is equivalent to $A\succcurlyeq 0$. Thus, if ${a_{ii}}{s_i} > \sum_{j = 1,j \ne i}^p {\left| {{a_{ij}}} \right|{s_j}} $,  $SAS \succ 0$, which implies that $A \succ 0$.
\end{proof}

\begin{proposition}\label{prop: nu_n}
Suppose that $H_i,i\in\{1,\ldots,n - 1\}$ are IFP with index $\nu_i>0$ and $H_n$ is IFP with index $\nu_n<0$. The open-loop system $\mathbf{H}_{D^{\top}HD}$ in Fig. \ref{open_loop} is passive  if there exist $s_i>0,i\in\{1,\ldots,n-1\}$ such that $${\nu _i} \ge \frac{{{s_1} + \dots + {s_{n - 1}}}}{{{s_i}}}\left| {{\nu _n}} \right|,i\in\{1,\ldots,n - 1\}.$$    
\end{proposition}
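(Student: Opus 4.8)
The plan is to reduce the positive semidefiniteness of $D^{\top}\Xi D$ to a single scaled diagonal-dominance check on a small matrix, reusing the machinery already assembled in the proof of Theorem~\ref{thm: atmostone} together with Lemma~\ref{lem: positive definite}. Recall from \eqref{note2_1} that $\mathbf{H}_{D^{\top}HD}$ is passive if and only if $D^{\top}\Xi D \succcurlyeq 0$, and that the congruence arguments in the proof of Theorem~\ref{thm: atmostone} show this is in turn equivalent to ${D_{ST}^{*\top}}\Xi D_{ST}^{*} \succcurlyeq 0$ for any star graph $\mathcal{G}_{ST}^{*}$ on $n$ nodes. So first I would invoke that equivalence, choosing the star centred at the single non-passive agent $H_n$.

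Second, I would evaluate ${D_{ST}^{*\top}}\Xi D_{ST}^{*}$ explicitly. Orienting each spoke so that the leaf $i$ is the positive end and the centre $n$ the negative end gives $D_{ST}^{*} = \STwoOne{I_{n-1}}{-\mathbf{1}_{n-1}^{\top}}$, whence
\begin{align*}
{D_{ST}^{*\top}}\Xi D_{ST}^{*} = \mathrm{diag}\{\nu_1,\ldots,\nu_{n-1}\} + \nu_n\,\mathbf{1}_{n-1}\mathbf{1}_{n-1}^{\top} =: A.
\end{align*}
This $A = [a_{ij}]\in\mathbb{R}^{(n-1)\times(n-1)}$ has diagonal entries $a_{ii} = \nu_i + \nu_n$ and off-diagonal entries $a_{ij} = \nu_n$, consistent with the description of the diagonal of ${D_{ST}^{*\top}}\Xi D_{ST}^{*}$ in the proof of Theorem~\ref{thm: atmostone}.

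Third, I would apply Lemma~\ref{lem: positive definite} to $A$ with the diagonal scaling $S = \mathrm{diag}\{s_1,\ldots,s_{n-1}\}\succ 0$. Since $\nu_n < 0$ we have $|a_{ij}| = |\nu_n| = -\nu_n$ for $i\neq j$, so the lemma's sufficient condition $a_{ii}s_i \ge \sum_{j\neq i}|a_{ij}|s_j$ reads $(\nu_i+\nu_n)s_i \ge |\nu_n|\sum_{j\neq i}s_j$. The key simplification is that, writing $\nu_n = -|\nu_n|$, the $-|\nu_n|s_i$ term on the left cancels the missing $j=i$ contribution on the right, so the inequality collapses to $\nu_i s_i \ge |\nu_n|(s_1+\cdots+s_{n-1})$, i.e. $\nu_i \ge \frac{s_1+\cdots+s_{n-1}}{s_i}|\nu_n|$, which is exactly the hypothesis. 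By Lemma~\ref{lem: positive definite} this yields $A\succcurlyeq 0$, hence $D^{\top}\Xi D\succcurlyeq 0$ and $\mathbf{H}_{D^{\top}HD}$ is passive.

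There is no serious obstacle: once the star-graph reduction from Theorem~\ref{thm: atmostone} is in hand, the argument is essentially bookkeeping. The one step worth carrying out carefully is the explicit evaluation of ${D_{ST}^{*\top}}\Xi D_{ST}^{*}$ as a rank-one perturbation of $\mathrm{diag}\{\nu_1,\ldots,\nu_{n-1}\}$, and the observation that its off-diagonal magnitude is precisely $|\nu_n|$ — this is what makes the scaled diagonal-dominance criterion of Lemma~\ref{lem: positive definite} reduce cleanly to the stated inequality rather than to a weaker or stronger sufficient condition.
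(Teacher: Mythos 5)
Your proposal is correct and follows essentially the same route as the paper: reduce $D^{\top}\Xi D \succcurlyeq 0$ via the congruence argument of Theorem~\ref{thm: atmostone} to the star graph centred at $H_n$, compute ${D_{ST}^{*\top}}\Xi D_{ST}^{*}$ (your explicit rank-one form $\mathrm{diag}\{\nu_1,\ldots,\nu_{n-1}\}+\nu_n\mathbf{1}\mathbf{1}^{\top}$ matches the paper's entrywise description from \eqref{eq:defDEDT}), and apply Lemma~\ref{lem: positive definite} with the scaling $S$, where the $\nu_n s_i$ term cancels to give exactly the stated inequality. No gaps.
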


\begin{proof}
Consider a ``virtual" star graph $\mathcal{G}_{ST}^*$ connecting the $n$ agents $H_i, i\in\{1,\ldots,n\}$ with $H_n$ being the centre node. Let $D_{ST}$ be its incidence matrix. From the derivation in \eqref{eq:defDEDT}, it follows that ${D_{ST}^{*{\top}}} \Xi {D_{ST}^*}= [\theta_{ij}]\in\mathbb{R}^{(n-1) \times (n-1)}$, where ${\theta _{ii}} =  {\nu _i} + {\nu _n}$ and ${\theta _{ij}} = {\nu _n}, j \neq i$ for all $i\in\{1,\ldots,n-1\}$. According to Lemma  \ref{lem: positive definite}, ${D_{ST}^{*{\top}}} \Xi {D_{ST}^*}$ is positive semi-definite if there exists a diagonal  matrix $S=\mathrm{diag}\left\{ {{s _1}, \ldots, s_{n-1}} \right\}\succ 0$ such that ${\theta _{ii}}{s_i} \ge \sum_{j = 1,j \ne i}^{n - 1} {\left| {{\theta _{ij}}} \right|{s_j}} $, i.e., $\left( {{\nu _i} + {\nu _n}} \right){s_i} \ge \sum_{j = 1,j \ne i}^{n - 1} {\left| {{\nu _n}} \right|{s_j}}$, for all $i\in\{1,\ldots,n-1\}$. Since $\nu_n <0$, $\left( {{\nu _i} + {\nu _n}} \right){s_i} \ge \sum_{j = 1,j \ne i}^{n - 1} {\left| {{\nu _n}} \right|{s_j}}$ is equivalent to ${\nu _i}{s_i} \ge \left| {{\nu _n}} \right|\left( {{s_1} +  \cdots  + {s_{n - 1}}} \right)$. By the proof of Theorem \ref{thm: atmostone}, $D^{\top}\Xi {D} \succcurlyeq 0$ is equivalent to ${D_{ST}^{*{\top}}} \Xi {D_{ST}^*} \succcurlyeq 0$. Therefore, if there exist $s_i>0,i\in\{1,\ldots,n-1\}$ such that ${\nu _i} \ge \frac{{{s_1}+ \dots + {s_{n - 1}}}}{{{s_i}}}\left| {{\nu _n}} \right|,i\in\{1,\ldots,n - 1\}$, ${{D^{\top}}\Xi D}$ is positive semi-definite, which implies that the open-loop system $\mathbf{H}_{D^{\top}HD}$ is passive.
\end{proof}

\subsection{Passivity Compensation in Feedback Connection}\label{sec: Passivity Compensation in Feedback Connection}
In the previous subsection, we looked into the passivity compensation among agents within the open-loop system $\mathbf{H}_{D^{\top}HD}$.
It has been established that in cases where more than one agent in the network lacks passivity, relying solely on compensation within the group of agents is insufficient. In this subsection, we turn our attention to the passivity surplus present in the coupling links and explore how it can be utilised locally to compensate for the lack of passivity in the agent dynamics. Specifically, we present in the following theorem  a distributed condition for consensus, formulated in terms of passivity indices and coupling gains.

\begin{theorem}\label{thm: consensus condition}
Consider the network described by \eqref{eq: system model} and \eqref{eq: input}, where each agent $H_i$, $i \in \{1, \ldots, n\}$, is IFP with index $\nu_i$. Let $r_i$ denote the number of neighbours of agent $H_i$. The network achieves input-output consensus if the following condition holds for all edges $(i, j) \in \mathcal{E}$:
\[
    \frac{1}{\overline{\alpha}_{ij}} + \nu_i + \nu_j - (r_i - 1)|\nu_i| - (r_j - 1)|\nu_j| > 0,
\]
where $\overline{\alpha}_{ij}$ is the upper sector bound of the operator $\vartheta_{ij}(\cdot)$, representing the nonlinear coupling between agents $i$ and $j$.
\end{theorem}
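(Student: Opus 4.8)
The plan is to analyze the feedback interconnection directly through its defining inner product rather than invoking the passivity theorem as a black box, working throughout with the edge signal $\eta := D^\top(Y+W)$ that feeds the coupling bank, so that $V = \Psi(\eta)$ and $U = -DV$ by \eqref{eq: iuput vector}--\eqref{eq:V}. First I would sum the IFP inequality \eqref{eq: IFP} over the $n$ agents to obtain $\langle U, Y\rangle_T \ge \langle U, \Xi U\rangle_T + \bar\beta$ with $\Xi = \mathrm{diag}\{\nu_1,\ldots,\nu_n\}$ and $\bar\beta = \sum_i \beta_i$, exactly as in \eqref{note2_1}. Substituting $U = -DV$ and $D^\top Y = \eta - D^\top W$ converts this into an inequality purely in $V$:
\[
\langle V, \eta\rangle_T \le -\langle V, D^\top\Xi D\, V\rangle_T - \bar\beta + \langle V, D^\top W\rangle_T.
\]

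Next I would exploit the sector bounds on the couplings. The upper bound $\vartheta_k(x)/x \le \overline\alpha_k$, together with $v_k$ and $\eta_k$ sharing the same sign (since $\underline\alpha_k>0$), gives the pointwise estimate $v_k\eta_k \ge v_k^2/\overline\alpha_k$, hence $\langle V, \eta\rangle_T \ge \langle V, \bar A^{-1}V\rangle_T$ where $\bar A := \mathrm{diag}\{\overline\alpha_1,\ldots,\overline\alpha_p\}$ and $\overline\alpha_k = \overline\alpha_{ij}$ for edge $k=(i,j)$. Combining this with the previous display yields the central inequality
\[
\langle V, (\bar A^{-1} + D^\top\Xi D)V\rangle_T \le -\bar\beta + \langle V, D^\top W\rangle_T.
\]

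The heart of the argument, and the step I expect to require the most care, is showing that $M := \bar A^{-1} + D^\top\Xi D$ is positive definite under the stated hypothesis. Using the explicit entries of $D^\top\Xi D$ from \eqref{eq:defDEDT}, the $k$-th diagonal entry of $M$ for edge $k=(i,j)$ is $1/\overline\alpha_{ij} + \nu_i + \nu_j$, while the off-diagonal entries in row $k$ are $\pm\nu_i$ for each of the $r_i-1$ other edges incident to node $i$ and $\pm\nu_j$ for each of the $r_j-1$ other edges incident to node $j$ (in a simple graph these two families are disjoint). Thus the absolute off-diagonal row sum is exactly $(r_i-1)|\nu_i| + (r_j-1)|\nu_j|$, so the theorem's condition is precisely strict diagonal dominance of $M$; applying Lemma \ref{lem: positive definite} with $S = I$ then gives $M \succ 0$. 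The delicate point here is correctly matching the off-diagonal structure of \eqref{eq:defDEDT} to the neighbour counts $r_i,r_j$ and confirming no double-counting occurs.

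Finally, setting $a := \lambda_{\min}(M) > 0$, the central inequality gives $a\|V\|_T^2 \le \|V\|_T\|D^\top W\|_T - \bar\beta$; splitting the cross term by Young's inequality yields a linear gain bound $\|V\|_T \le c_1\|D^\top W\|_T + c_2$ for constants $c_1,c_2$ depending on $a$ and $\bar\beta$. To recover \eqref{eq:aim}, I would invoke the lower sector bound $|\eta_k| \le |v_k|/\underline\alpha_k$, giving $\|\eta\|_T \le \|V\|_T/\underline\alpha_{\min}$ with $\underline\alpha_{\min} := \min_k \underline\alpha_k$, and then use $\|D^\top Y\|_T = \|\eta - D^\top W\|_T \le \|\eta\|_T + \|D^\top W\|_T$. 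Assembling these produces $\|D^\top Y\|_T \le \rho\|D^\top W\|_T + \sigma$ with $\rho = 1 + c_1/\underline\alpha_{\min}$ and $\sigma = c_2/\underline\alpha_{\min}$, which is exactly input-output consensus.
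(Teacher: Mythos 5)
Your proposal is correct and follows essentially the same route as the paper's proof: summing the IFP inequalities to get the $D^\top\Xi D$ quadratic form, using the upper sector bound to extract $\langle V,\bar A^{-1}V\rangle_T$ (the paper's $\Lambda$), establishing positive definiteness of $\bar A^{-1}+D^\top\Xi D$ via Gershgorin/diagonal dominance with the same off-diagonal row-sum count $(r_i-1)|\nu_i|+(r_j-1)|\nu_j|$, and closing with Young's inequality plus the lower sector bound. The only cosmetic difference is that you phrase the central inequality directly in terms of $\eta=D^\top(Y+W)$ rather than through the paper's chain of identities for $\langle V,D^\top W\rangle_T$; the substance is identical.
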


\begin{proof}
% The proof is omitted due to space constraints and is available in cite (). 
Since  $0 < \underline{\alpha}_{ij}  \le \frac{\left( {{\vartheta _{ij}}\left( x \right)} \right)\left( t \right)}{{x\left( t \right)}} \le \overline{\alpha}_{ij} < \infty$ for all nonzero $x(t)$,  which implies that $\frac{1}{{{\overline{\alpha}_{ij}}}}\left\| {{\vartheta _{ij}}\left( x \right)} \right\|_T^2 \le {\left\langle {x,{\vartheta _{ij}}\left( x \right)} \right\rangle _T}$. It then follows from \eqref{eq:Psi} and \eqref{eq:V} that
\begin{align}\label{eq: V,Y}
& \, {\left\langle {V,{D^{\top}}\left( {Y + W} \right)} \right\rangle _T} \nonumber\\
=& \, \int_0^T {{{\left( {Y + W} \right)}^{\top}}D} \Psi \left( {{D^{\top}}\left( {Y + W} \right)} \right)dt \nonumber\\
\ge& \, \int_0^T {\Psi {{\left( {{D^ {\top} }\left( {Y + W} \right)} \right)}^\top}\Lambda \Psi \left( {{D^ {\top} }\left( {Y + W} \right)} \right)dt} \nonumber\\
=& \, {\left\langle {V,\Lambda V} \right\rangle _T},
\end{align}
where $\Lambda :=\text{diag}\{ {\alpha _1}, \ldots ,{\alpha _p}\}$ with ${\alpha _k} = \frac{1}{{{{\overline{\alpha} }_{ij}}}}$ if $d_{ik}=1$ and $d_{jk}=-1$.  Since the agents are IFP with index $\nu_i,i\in\{1,\ldots,n\}$, we have that
\begin{align}\label{eq: U,Y}
{\left\langle {U,Y} \right\rangle _T}& \ge {\nu _1}\left\| {{u_1}} \right\|_T^2 + \beta_1 +  \cdots {\nu _n}\left\| {{u_n}} \right\|_T^2 +\beta_n\nonumber\\
&= {\left\langle {U,\Xi U} \right\rangle _T}+\bar \beta,
\end{align}
where $\Xi  :=\mathrm{diag}\left\{ {{\nu _1}, \dots, {\nu _n}} \right\}$ and $\bar \beta  = \sum_{i = 1}^n {{\beta _i}} $. 
It can be derived that
\begin{align}
 {\left\langle {V,{D^{\top}}W} \right\rangle _T}
=& \,{\left\langle {V, - {D^{\top}}Y} \right\rangle _T} + {\left\langle {V,{D^{\top}}\left( {Y + W} \right)} \right\rangle _T}\nonumber\\
\overset{\mathrm{(a)}}{=}  & \,{\left\langle {U, Y} \right\rangle _T} + {\left\langle {V,{D^{\top}}\left( {Y + W} \right)} \right\rangle _T}\nonumber\\
\overset{\mathrm{(b)}}{\ge} & \, {\left\langle {U,\Xi U} \right\rangle _T}+{\left\langle {V,\Lambda V} \right\rangle _T}+\bar \beta \nonumber\\
\overset{\mathrm{(c)}}= & {\left\langle {V,\left( {{D^{\top}}\Xi D + \Lambda } \right)V} \right\rangle _T}+\bar \beta,\label{eq: V,W}
\end{align}
where (a) holds due to  $U=-DV$, (b) follows from \eqref{eq: U,Y}, (c) is obtained by combining \eqref{eq: V,Y} and the fact that ${\left\langle {U,\Xi U} \right\rangle _T} = {\left\langle {V, D^{\top}\Xi D V} \right\rangle _T}$. 
% \begin{align}\label{eq: V,W}
% & \, {\left\langle {V,\left( {{D^{\top}}\Xi D + \Lambda } \right)V} \right\rangle _T}+\bar \beta\nonumber\\
% = & \, {\left\langle {U,\Xi U} \right\rangle _T}+{\left\langle {V,\Lambda V} \right\rangle _T}+\bar \beta \nonumber\\
% \le & \,{\left\langle {U, Y} \right\rangle _T} + {\left\langle {V,{D^{\top}}\left( {Y + W} \right)} \right\rangle _T}\nonumber\\
% =& \,{\left\langle {V, - {D^{\top}}Y} \right\rangle _T} + {\left\langle {V,{D^{\top}}\left( {Y + W} \right)} \right\rangle _T}\nonumber\\
% =& \, {\left\langle {V,{D^{\top}}W} \right\rangle _T},
% \end{align}

From the derivation in \eqref{eq:defDEDT}, we have that  ${D^{\top}}\Xi D + \Lambda = [\zeta_{kl}]\in\mathbb{R}^{p \times p}$, where ${\zeta _{kk}} = \frac{1}{{{{\overline{\alpha} }_{ij}}}} + {\nu _i} + {\nu _j},\sum_{l = 1,l \ne k}^p {\left| {{\zeta _{kl}}} \right|}  = \left( {{r_i} - 1} \right)\left| {{\nu _i}} \right| + \left( {{r_j} - 1} \right)\left| {{\nu _j}} \right|$ if $d_{ik}=1$ and $d_{jk}=-1$. By hypothesis, $\frac{1}{{{{\overline{\alpha} }_{ij}}}} + {\nu _i} + {\nu _j} - \left( {{r_i} - 1} \right)\left| {{\nu _i}} \right| - \left( {{r_j} - 1} \right)\left| {{\nu _j}} \right| > 0$ for all $(i,j)\in\mathcal{E}$. We obtain from Lemma \ref{lem: positive definite} that ${D^{\top}}\Xi D + \Lambda$ is positive definite. Let $\kappa >0$ be the smallest eigenvalue of ${D^{\top}}\Xi D + \Lambda$, and then from \eqref{eq: V,W} we obtain
\begin{align*}
\kappa \left\| V \right\|_T^2 &\le {\left\langle {V,\left( {{D^{\top}}\Xi D + \Lambda } \right)V} \right\rangle _T} \le {\left\langle {V,{D^{\top}}W} \right\rangle _T} - \bar \beta\\   
& \le {\left\langle {V,{D^{\top}}W} \right\rangle _T} - \bar \beta + \frac{1}{2}\left\| {\sqrt \kappa V - \frac{1}{{\sqrt \kappa }}{D^{\top}}W} \right\|_T^2\\
& = \frac{\kappa }{2}\left\| V \right\|_T^2 + \frac{1}{{2\kappa}}\left\| {{D^{\top}}W} \right\|_T^2 - \bar \beta,
\end{align*}
which leads to
$\left\| V \right\|_T^2 \le \frac{1}{{{\kappa^2}}}\left\| {{D^{\top}}W} \right\|_T^2 - \frac{{2\bar \beta }}{\kappa }.$ 
By the fact that ${a^2} \pm {b^2} \le  {\left( {\left| a \right| + \left| b \right|} \right)^2}$, this yields
\begin{align}\label{eq: norm V}
{\left\| V \right\|_T} \le \frac{1}{\kappa}{\left\| {{D^{\top}}W} \right\|_T} + \sqrt {\frac{{2\left| {\bar \beta } \right|}}{\kappa }}.
\end{align}
Noting from the definition of  $\Psi$ in \eqref{eq:Psi} and \eqref{eq:V} that
\begin{align}\label{eq: norm V 2}
{\left\| {V} \right\|_T} \ge \underline{\alpha} {\left\| {{D^{\top}}\left( {Y + W} \right)} \right\|_T},
\end{align}
where $\underline{\alpha} := \mathop {\min }\limits_{(i,j) \in \mathcal{E}} {\underline{\alpha}_{ij}}$ with ${\underline{\alpha}_{ij}}$ is the lower sector bound of $\vartheta_{ij}(\cdot)$. It follows from \eqref{eq: norm V}, \eqref{eq: norm V 2} and $\left| {a + b} \right| \ge \left| a \right| - \left| b \right|$ that
$${\left\| {{D^{\top}}Y} \right\|_T} \le \left( {\frac{1}{{\kappa \underline{\alpha} }} + 1} \right){\left\| {{D^{\top}}W} \right\|_T} + \frac{1}{{\underline{\alpha} }} \sqrt {\frac{{2\left| {\bar \beta } \right|}}{\kappa }}, \, \forall T\ge 0,$$
which completes the proof.
\end{proof}

\begin{example}
\begin{figure}[!ht]
\centering
\includegraphics[width=4.5cm]{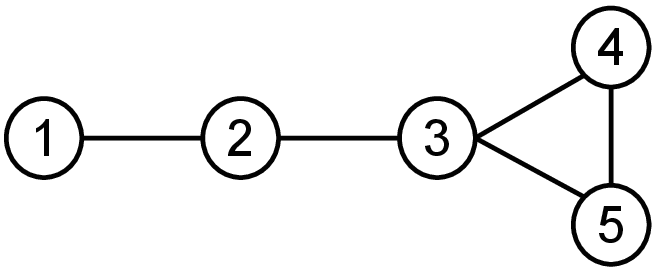}
\caption{The network considered in Example 1.}
\label{Fig4}
\end{figure}
Consider the network in Fig. \ref{Fig4}, where the dynamics of the agents are given by 
\begin{align*}
&{H_1} = \frac{{s + 0.8}}{{s(s+0.57)}}, {H_2} = \frac{{s + 1}}{{s(s+0.7)}}, {H_3} = \frac{{s + 1.5}}{{s(s+1)}},\\
&{H_4} = \frac{{\left( {s + 0.45} \right)\left( {s + 0.65} \right)}}{{s\left( {s + 0.4} \right)\left( {s + 0.6} \right)}}, {H_5} =\frac{{\left( {s + 0.5} \right)\left( {s + 0.9} \right)}}{{s\left( {s + 0.43} \right)\left( {s + 0.8} \right)}}.
\end{align*}
These agent dynamics $H_i$ exhibit significantly different set of zeros and poles. The incidence matrix of the graph (in Fig. \ref{Fig4}) is given by
$$D=\begin{bmatrix}
     1  &   0   &  0  &   0  &   0\\
    -1  &   1   &  0  &   0  &   0\\
     0  &  -1   &  1  &   1  &   0\\
     0  &   0   & -1  &   0  &   1\\
     0  &   0   &  0  &  -1  &  -1
\end{bmatrix}.$$
Suppose the agents are coupled by the following sector-bounded functions:
$${\vartheta _{ij}(x)}=\begin{cases}
{a_{ij}\sin( x )},&\text{if } \left| {x} \right| < \frac{\pi }{2},\\
a_{ij} x,&\mathrm{otherwise},
\end{cases}\quad (i,j) \in \mathcal{E}$$
with $a_{12}= 0.65$, $a_{23}= 0.40$, $a_{34}= 0.34$, $a_{35} = 0.33$ and $a_{45} = 0.44$. It can be verified that ${{\overline{\alpha} }_{ij}} = a_{ij}$ for all $(i,j) \in \mathcal{E}$. Next, by solving the LMI  in \cite[Lemma 2]{kottenstette2014relationships}, we obtain the IFP index $\nu_i$ for each agent: $\nu_1 = -0.71$, $\nu_2 = -0.41$, $\nu_3 = -0.55$, $\nu_4=-0.50$, and $\nu_5 = -0.61$. Note that the number of neighbours of each agent are $r_1 =1$, $r_2 =2$, $r_3 =3$, $r_4 = 2$ and $r_5 =2$. Therefore, it can be verified that $\frac{1}{{{\overline{\alpha} }_{ij}}} + {\nu _i} + {\nu _j} - \left( {{r_i} - 1} \right)\left| {{\nu _i}} \right| - \left( {{r_j} - 1} \right)\left| {{\nu _j}} \right| > 0$ for all $(i,j) \in \mathcal{E}$. By Theorem \ref{thm: consensus condition}, 
% $${D^{\top}}\Xi D + \Lambda = \begin{bmatrix}
%    0.4185  &  0.4100    &     0      &   0   &      0\\
%    0.4100  &  1.5400  &  0.5500  &  0.5500   &      0\\
%         0  &  0.5500  &  1.7712  & -0.5500   & 0.6200\\
%         0  &  0.5500  & -0.5500  &  1.6527   &-0.5000\\
%         0   &      0  &  0.6200 &  -0.5000    &1.1527
% \end{bmatrix} \succ 0,$$
% and 
the  network will achieve IO consensus. With initial value $Y(0) ={\left[ {\begin{array}{*{20}{c}}
-0.3 & -0.25 & -0.625 & 0.5963 & -0.2725
\end{array}} \right]^{\top}} $ and the measurement noise and/or communication noise $w_i(t) = 0.1{\bar w_i}(t)$, where ${\bar w_i}(t)$ is white Gaussian noise with ${\bar w_i}(t) \sim \mathcal{N}(0,1)$, the outputs of the agents reach consensus approximately with ${\left\| {D^{\top} Y} \right\|_T}$ bounded in terms of ${\left\| {D^{\top} W} \right\|_T}$, as shown in  Fig. \ref{example1}.
\begin{figure}[!ht]
\centering
\includegraphics[width=6.5cm]{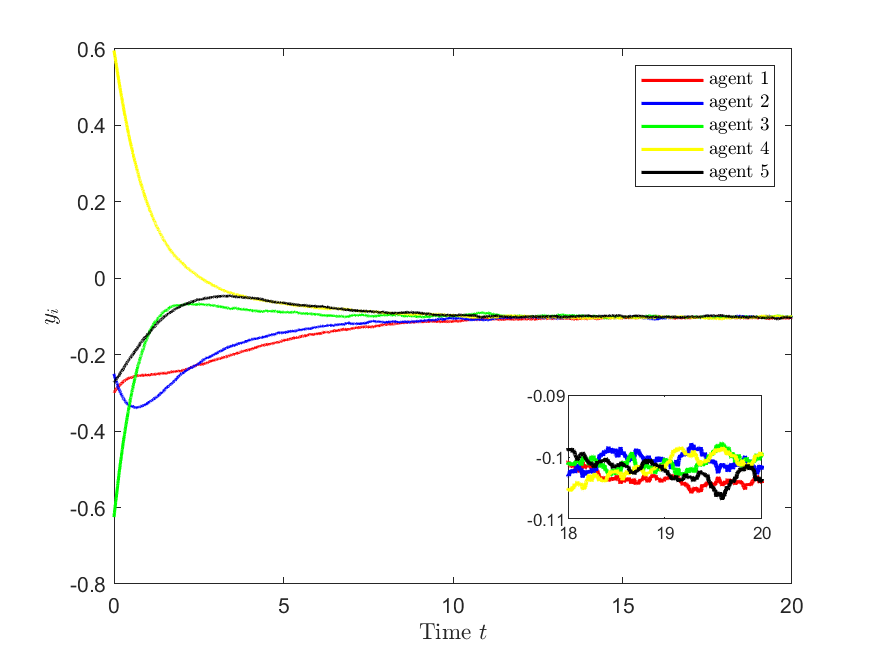}
\caption{Output trajectories of the agents in Example 1.}
\label{example1}
\end{figure}
\end{example}

\section{Conclusion}\label{sec: Conclusion}
We proposed a passivity compensation-based approach for output consensus in heterogeneous networks with nonlinear couplings, under measurement and communication noise. Focusing on input-feedforward passive (IFP) agents, we showed that agent-to-agent passivity compensation is only feasible when at most one agent lacks passivity, and provided a scheme to offset this deficiency using the surplus of other agents.  We also addressed compensation between the lack of passivity in agents and the surplus in the coupling links, deriving a distributed condition based on passivity indices and coupling gains to guarantee output consensus.

% Future work will explore a plug-and-play framework for dynamic network expansion and broader classes of nonlinear coupling.

% Finally, building on this distributed condition, we propose a plug-and-play framework that allows new agents to be seamlessly added to the network, with only local conditions required to be checked, while preserving robust consensus. The practical relevance of this framework is illustrated through a case study.
\bibliographystyle{IEEEtran}
\bibliography{reference}
\end{document}